\newcommand{\nc}{{NeuraCrypt}\xspace} 
\newcommand{\isim}{{\text{i-sim}}}
\newcommand{\psim}{{\text{p-sim}}}
\newcommand{\AUX}{\mathsf{AUX}}
\newcommand{\aux}{\mathsf{aux}}
\newcommand{\sanjam}[1]{}
\newcommand{\sedits}[1]{}
\newcommand{\FT}[1]{}
\newcommand{\SJ}[1]{}
\newcommand{\Extract}{\mathsf{Pred}}
\newcommand{\MohNote}[1]{}
\newcommand{\Snote}[1]{}
\newcommand{\AbhrNote}[1]{}
\newcommand{\SamNote}[1]{}
\newcommand{\SSNote}[1]{}
\newcommand{\fhat}[2]{\ifthenelse{\equal{#2}{}}{\hat{f}(#1)}{\ifthenelse{\equal{#2}{0}}{\hat{f}(\emptyset)}{\hat{f}(#1_{\leq #2})}}}
\newcommand{\ftild}[2]{\ifthenelse{\equal{#2}{}}{\tilde{f}(#1)}{\ifthenelse{\equal{#2}{0}}{\tilde{f}(\emptyset)}{\tilde{f}(#1_{\leq #2})}}}
\newcommand{\ftildstar}[2]{\ifthenelse{\equal{#2}{}}{\tilde{f^*}(#1)}{\ifthenelse{\equal{#2}{0}}{\tilde{f^*}(\emptyset)}{\tilde{f^*}(#1_{\leq #2})}}}
\newcommand{\ghat}[2]{\ifthenelse{\equal{#2}{}}{\hat{g}(#1)}{\ifthenelse{\equal{#2}{0}}{\hat{g}(\emptyset)}{\hat{g}(#1_{\leq #2})}}}
\newcommand{\mfix}[2]{\ifthenelse{\equal{#2}{}}{m(#1)}{\ifthenelse{\equal{#2}{0}}{m(\emptyset)}{m(#1_{\leq #2})}}}
\newcommand{\fapp}[2]{\ifthenelse{\equal{#2}{}}{\tilde{f}(#1)}{\ifthenelse{\equal{#2}{0}}{\tilde{f}(\emptyset)}{\tilde{f}(#1_{\leq #2})}}}
\newcommand{\Risk}{\mathsf{Risk}}
\newcommand{\aSF}{\mathsf{a}}
\newcommand{\gSF}{\mathsf{g}}
\newcommand{\hSF}{\mathsf{h}}
\newcommand{\avr}[2]{\ifthenelse{\equal{#2}{}}{\aSF({#1})}{\ifthenelse{\equal{#2}{0}}{\aSF(\emptyset)}{\aSF({#1}_{\leq #2})}}}
\newcommand{\avrMax}[2]{\ifthenelse{\equal{#2}{}}{\aSF^*({#1})}{\ifthenelse{\equal{#2}{0}}{\aSF^*(\emptyset)}{\aSF^*({#1}_{\leq #2})}}}
\newcommand{\avrApp}[2]{\ifthenelse{\equal{#2}{}}{\tilde{\aSF}({#1})}{\ifthenelse{\equal{#2}{0}}{\tilde{\aSF}(\emptyset)}{\tilde{\aSF}({#1}_{\leq #2})}}}
\newcommand{\avrAppMax}[2]{\ifthenelse{\equal{#2}{}}{\tilde{\aSF}^*({#1})}{\ifthenelse{\equal{#2}{0}}{\tilde{\aSF}^*(\emptyset)}{\tilde{\aSF}^*({#1}_{\leq #2})}}}
\newcommand{\ArgMax}[2]{\ifthenelse{\equal{#2}{}}{\hSF({#1})}{\ifthenelse{\equal{#2}{0}}{\hSF(\emptyset)}{\hSF({#1}_{\leq #2})}}}
\newcommand{\AppArgMax}[2]{\ifthenelse{\equal{#2}{}}{\tilde{\hSF}({#1})}{\ifthenelse{\equal{#2}{0}}{\tilde{\hSF}(\emptyset)}{\tilde{\hSF}({#1}_{\leq #2})}}}
\newcommand{\gain}[2]{\ifthenelse{\equal{#2}{}}{\gSF(#1)}{\gSF(#1_{\leq #2})}}
\newcommand{\gainMax}[2]{\ifthenelse{\equal{#2}{}}{\gSF^*(#1)}{\gSF^*(#1_{\leq #2})}}
\newcommand{\gainApp}[2]{\ifthenelse{\equal{#2}{}}{\tilde{\gSF}(#1)}{\tilde{\gSF}(#1_{\leq #2})}}
\newcommand{\gainAppMax}[2]{\ifthenelse{\equal{#2}{}}{\tilde{\gSF}^*(#1)}{\tilde{\gSF}^*(#1_{\leq #2})}}
\newcommand{\pr}[2][]{\Pr_{\ifthenelse{\isempty{#1}}{}{{#1}}}\left[{#2}\right]}
\newcommand{\remove}[1]{}
\newcommand{\set}[1]{\left\{ #1 \right\}}
\newcommand{\R}{{\mathbb R}}
\newcommand{\N}{{\mathbb N}}
\newcommand{\cD}{{\mathcal D}}
\newcommand{\cK}{{\mathcal K}}
\newcommand{\eps}{\varepsilon}
\newcommand{\poly}{\operatorname{poly}}
\def\th@protocol{%
    \normalfont 
    \setbeamercolor{block title example}{bg=orange,fg=white}
    \setbeamercolor{block body example}{bg=orange!20,fg=black}
    \def\inserttheoremblockenv{exampleblock}
  }
\newcommand{\namedref}[2]{#1~\ref{#2}}
\newcommand{\torestate}[3]{%
\expandafter \def \csname BBRESTATE #2 \endcsname{#3}
\theoremstyle{plain}
\newtheorem{BBRESTATETHMNUM#2}[theorem]{#1}
\begin{BBRESTATETHMNUM#2}\label{#2}\csname BBRESTATE #2 \endcsname   \end{BBRESTATETHMNUM#2}
\newtheorem*{BBRESTATETHMNONNUM#2}{\namedref{#1}{#2}}
}
\newcommand{\restate}[1]{\begin{BBRESTATETHMNONNUM#1}[Restated] \csname BBRESTATE #1 \endcsname
\end{BBRESTATETHMNONNUM#1}}
\newtheorem{definition}{Definition}
\newtheorem{theorem}{Theorem}
\newtheorem{remark}{Remark}
\begin{document}

\title{\nc is not private}

\author[1]{Nicholas Carlini$^*$}
\author[2,3]{Sanjam Garg}
\author[4]{Somesh Jha}
\author[5]{Saeed Mahloujifar}
\author[6]{\\ Mohammad Mahmoody}
\author[7,1]{Florian Tramer}

\affil{Google, $^2$UC Berkeley, $^3$NTT Research, $^4$University of Wisconsin, $^5$Princeton University, $^6$University of Virginia, $^7$Stanford University}

\date{}
\maketitle

\newcommand\blfootnote[1]{%
  \begingroup
  \renewcommand\thefootnote{}\footnote{#1}%
  \addtocounter{footnote}{-1}%
  \endgroup
}
\blfootnote{$^*$ Authors ordered alphabetically.}

\section*{Abstract}
\nc (Yara et al. arXiv 2021) is an algorithm that converts a sensitive dataset to
an encoded dataset so that
(1) it is still possible to train machine learning models on
the encoded data, but
(2) an adversary who has access only to the encoded dataset can
not learn much about the original sensitive dataset.
We break \nc's privacy claims, by perfectly solving the authors' public challenge, and by showing that \nc does not satisfy
the formal privacy definitions posed in the original paper.
Our attack consists of a series of boosting steps that,
coupled with various design flaws,
turns a $1\%$ attack advantage into a $100\%$ complete break of the scheme.

\section{Introduction}
In order to train neural networks on sensitive datasets (such as
medical images~\cite{hosny2018artificial,wernick2010machine,esteva2017dermatologist} or personal messages~\cite{chen2019gmail}) it is necessary
that the models be \emph{privacy-preserving}.
Given access to the trained model, it should not be possible to
learn anything about the training dataset.
One approach to training models that preserve privacy
``encodes'' each input with an encoding function
$e : \mathcal{X} \to \mathcal{Y}$ that
maps an original dataset $\mathcal{X}$ to an encoded dataset $\mathcal{Y}$ \cite{huang2020instahide}. 
The encoding should satisfy two properties:
\begin{enumerate}
\item \textbf{Utility:} A learning algorithm can use $\mathcal{Y}$ to train a useful
  model that is (approximately) as good as if the original dataset
  $\mathcal{X}$ was used instead.
\item \textbf{Privacy:} It is not possible to study the encoded dataset $\mathcal{Y}$ to
  learn nontrivial and sensitive properties about the original dataset $\mathcal{X}$.
\end{enumerate}

Encoding schemes are exciting because they allow 
\emph{any} training algorithm to run on the encoded dataset,
making special-purpose privacy-preserving training techniques unnecessary.

\subsection{\nc}
\nc~\cite{yala2021neuracrypt} is an encoding technique that
aims to achieve utility with privacy.
%
\nc encodes images in the training set
one at a time by running them forward through
a neural network with random weights.
The encoded outputs are directly the output of this model
after a pixel-block permutation.

Let $x \in \mathcal{X}$ be a $w \times h \times c$ dimensional image.
\nc first splits the image into $a^2$ patches of size ${w \over a} \times {h \over a} \times c$.
Each patch is then processed independently through a series of linear
and nonlinear transformations.
Specifically, each patch is first flattened into a vector
$\hat{x} \in \mathbb{R}^{{w \over a} \cdot {h \over a} \cdot c}$.
Then, \nc transforms each vector $\hat z_i = (f_{k} \circ ReLU \circ f_{k-1} \circ \dots \circ ReLU \circ f_{1})(\hat{x})$.
Each transform $f_i(x) = A_i x + b_i$ is a linear projection with randomly
initialized weights, sampled from a Normal distribution,
and $ReLU(x) = \max(x, 0)$.
We let $g : \hat x \to \hat z$ denote this patch encoding.

NeuraCrypt then stops processing at the patch-level, and begins
processing at the image-level.
Given the ordered set of patches $\{\hat z_i\}_{i=1}^{a^2}$,
\nc then adds a ``positional encoding'' $\delta_i$ that is different for
each patch in the image, giving a new set of images
$\tilde z_i = \hat z_i + \delta_i$ 
and performs a final linear projection
to obtain the encoded images $\hat{y}_i = f_{k+1}(ReLU(\tilde z_i))$.
\nc finally randomly permutes the individual patches $\hat{y}_i$ with a fresh random permutation $\pi$,
and returns $y_{\pi(i)}$.

\begin{table}
\centering
\begin{tabular}{@{}llll@{}}
\toprule
  \textbf{Security} & & \textbf{Number} & \textbf{Percent} \\
  \textbf{Parameter} & \textbf{Dataset} & \textbf{Of Images} & \textbf{Correct} \\
  \midrule
   $k=2$ & Ours (ImageNet) & 10,000 & 100\% \\
   $k=7$ & Ours (ImageNet) & 10,000 & 100\% \\
   $k=15$ & Ours (ImageNet) & 10,000 & 100\% \\
   \midrule
   $k=2$ & \nc Challenge (CheXpert) & 14,643 & 100\% \\
   $k=7$ & \nc Challenge (CheXpert) & 14,643 & 100\% \\
   \bottomrule
\end{tabular}
\caption{We completely break NeuraCrypt's privacy claims by constructing an algorithm that
can match original images to encoded images perfectly, with $100\%$ probability.
Our attacks work on our own dataset (ImageNet \cite{deng2009imagenet}) and the NeuraCrypt challenge 
(CheXpert medical images \cite{irvin2019chexpert})}
\label{tab:main}
\end{table}

This process is repeated for each image in the dataset, using the same random neural
network and the same positional encoding.
The \nc privacy claim is that these encodings preserve the privacy of their inputs. We show this is not the case.
We do not study the utility of \nc---it is not
obvious that processing patches of an image produces accurate models, but the
authors find it does.

\subsection{Privacy Game}

We evaluate privacy under the \nc Challenge \cite{nc_challenge}.

\smallskip \noindent 
\textbf{Setup.}
Let $\mathcal{X}$ be a dataset of $N$ unlabeled images
and $\text{sym}(\mathcal{X})$ an exponentially large family of encoding functions .
The two participants,
Alice and Bob, are both given $\mathcal{X}$ and $\text{sym}(\mathcal{X})$.

\smallskip \noindent 
\textbf{Alice} chooses a random encoding transform
$T \in \text{sym}(\mathcal{X})$ and chooses a random ordered subset
$\vec x = \{x_i\} \subset \mathcal{X}$.
Alice encodes $\{y_i\}$ by computing $y_i = T(x_i)$.
Alice chooses a random matching $\sigma$ and sends
to Bob the ordered set $\vec y = \{y_{\sigma_i}\}_i$.

\smallskip \noindent 
\textbf{Bob} receives $\vec x$ and $\vec y$ from Alice, 
and runs some attack to generate his guess of the matching
$\tilde\sigma = \mathcal{A}(\vec x,\vec y)$.
Bob's ``score'' is equal to the number of
entries where $\tilde \sigma$ correctly matches $\sigma$.
For a secure scheme, Bob should expect to score just $1$, and for $\lVert \mathcal{X} \rVert$
moderately large then $\text{Pr}[\text{Bob's score} \ge s] \approx {1 \over {s!}}$.

\subsection{Results}

We solve the above privacy game for \nc.
Our attack recovers the  entire mapping nearly perfectly, and as part
of the attack, also recovers Alice's transformation $T$---allowing further attacks if future images are encoded.
Prior instance-encoding schemes~\cite{huang2020instahide} were also shown to be not private through complete reconstruction attacks~\cite{carlini2020attack}.

Table~\ref{tab:main} gives our main results;
regardless of the size of the security parameter, or of the number of
images that have been encoded, we achieve a $100\%$ attack success rate.
We developed our attack exclusively using the ImageNet dataset,
and then evaluated on the NeuraCrypt Challenge once it was released.

\section{Our Attack}

We break \nc with a series of steps that iteratively boost
the adversary's advantage from $0$ all the way to a perfect $N$.
Our attack is dominated by a quadratic-time all-pairs comparison
between original and encoded images. We solve the \nc challenge
in under $6$ hours wall-clock time on a single machine. The attack steps are as follows:
%

\begin{enumerate}[leftmargin=3em]
\item Weakly break \nc's privacy.
\begin{enumerate}
\item[\S\ref{s1}] Construct a \emph{patch} similarity function $\psim$ that 
  returns $1$ if a patch $\hat y$ was generated by
  an original patch $\hat x$, and $0$ otherwise.

\item[\S\ref{s2}] Construct an \emph{image} similarity function $\isim$ that
  detects if an entire image $x$ was used to generate an entire
  image $y$.
  The resulting image similarity function has $>95\%$ accuracy.

\item[\S\ref{s3}] Match each $x_i \to y_{m(i)}$. We construct matching $m$
  by solving the minimum cost bipartite matching from each original
  image $x$ to each encoding $y$, with edge costs
   $\isim(x,y)$.
  This matching is a partial break of \nc and matches in more than
  a few percent of positions.

\end{enumerate}

\item Strongly break \nc's privacy.
\begin{enumerate}
\item[\S\ref{s4}] Recover the exact permutation $\pi_j$ used to encode each image
  $y_j$ by leveraging the fact that the ``position encoding'' leaks
  information.
  This step only recovers $\tilde \pi_j$ up to a single global permutation $\pi_g$ (i.e., so $\pi_j = \sigma \circ \tilde \pi_j$).

\item[\S\ref{s5}] Recover the global permutation
  $\sigma$ using image-specific local permutations $p^{(j)}$,
  and the approximate matching $m$
  by solving another matching problem on the
  image-encoding matching.
  %

\item[\S\ref{s6}] Improve the matching by repeating the prior two steps.
with correctly
  permuted images.
  The matching here is often correct in almost all positions.
\end{enumerate}

\item Completely break \nc's privacy.
\begin{enumerate}
\item[\S\ref{s7}] Extract the transformation $T$ by gradient descent,
  training on the now-approximately-aligned images and encodings.

\item[\S\ref{s8}] Recover the exact global matching by constructing a matching
  problem with weights given by similarity between encoded images
  $y_j$ and our reconstruction of them $\tilde T(x_i)$.
\end{enumerate}

\end{enumerate}
  
We implement our attack in $700$ lines of JAX, and make
our open source code available.

\subsection{Recover initial matching}

\noindent
\subsubsection{Compute patch similarity}
\label{s1}

\nc first maps an image $x$ into a set of patches $\{\hat x_i\}_{i=1}^{a^2}$,
and then encodes each patch with the transformation $\hat y_i = T(\hat x_i)$
(where $T$ is the defender's randomly chosen transform).
Our first step in breaking \nc learns a similarity function
$\psim(x, y)$ that outputs the probability that we believe
$\hat y = g(\hat x)$, as opposed to some other $\hat y = g(\hat x')$.

To do this we first construct a large dataset $D = \{(\hat x, t(\hat x))\}$
of patches with their corresponding encoded output, by sampling random encoders $t$.

We then construct two neural networks $n_x : \hat x \to \mathbb{R}^{d}$
and $n_y : \hat y \to \mathbb{R}^d$ that embed the original image
and encoded image into the same $d$-dimensional embedding space.
We train these neural networks so that
$\ell(\hat x, \hat y) = \langle n_x(\hat x), n_y(\hat y) \rangle$ is small when $(\hat x,\hat y) \in \mathcal{D}$.
Unfortunately training exclusively on this loss would give a degenerate
solution $n_x \equiv n_y \equiv 0$.
We address this by additionally requiring that $\ell(\hat x, \hat y')$ is
large when $\hat x$ and $\hat y'$ are unrelated patches.
Specifically, we train with a k-way contrastive softmax loss.

A minor, but important, detail is that instead of the neural networks
operating on the image pixels directly, we first compute the $k$th order
moments
with the $0$th order moment being the
mean $\mu_0 = {1 \over n} \sum_{i} v_i$, and the remaining higher order
moments being defined as
$\mu_k = {1 \over n}\sum_{i=1}^n (v_i - \mu_0)^k$.
This allows us to compute the moment vector over both image patches
(by flattening the vector) and over encoded vectors (which are already flat).
Our patch similarity function reaches up to $80\%$ accuracy
on a balanced dataset of patches that match and don't (i.e., $50\%$ is random guessing).

\subsubsection{Compute image similarity}
\label{s2}

Given the patch similarity function $\psim$ we then construct
an image similarity function $\isim$.
Suppose we knew the permutation $\pi$ that \nc applied to the encoded image patches.
Then we can extend our patch similarity function to an entire image, by simply averaging the similarity across all patches:
\[\isim_{\pi}(x, y) = {1 \over a^2} \sum_{i =1}^{a^2} \psim(x^{(i)}, y^{(\pi(i))}).\]

Even though we do not know which permutation $\pi$ was used,
we can still apply this idea.
If image $x$ actually was used to construct $y$, then
while we can not know where the patch $\hat x^{(0)}$ maps onto,
we do know that it maps onto \emph{some} $\hat y^{(j)}$.

Therefore, we can construct a bipartite graph with one side
containing original image patches $\hat x$ and the other side containing encoded patches $\hat y$.
The weight along the edge between a pair of patches is given
by $\psim(\hat x,\hat y)$.
%
The minimum cost maximum weight matching in this bipartite graph
is then the best way to pair together the embeddings, and we define
the cost of this matching to be the output of our $\isim$ function.
Viewed differently, this is just an efficient algorithm to compute:
$\max_{\rho \in Sym(a^2)} \isim_{\rho}$.

\subsubsection{Recover image-encoding matches}
\label{s3}
Given the matching function $\isim$, it is now trivial to match
each image $x_i$ with an encoded image $y_j$ by computing
$\isim(x_i, y_j)$ for all $N^2$ pairs of images, constructing
a cost matrix, and again solving the minimum cost bipartite matching.
Denote this matching by $m$ so that $m : i \to j$.

Note that this matching is not going to be very high quality,
because the image similarity matching is imperfect.%
\footnote{Even with $99\%$ accuracy, there are many false positives due to the low base rate: only $1$ of the $10,000$ original images is a true positive.}
However, as long as the accuracy is better than random chance, it
will suffice for our purposes.

\subsection{Boost to high quality matching}

\subsubsection{Recover per-image permutations}
\label{s4}
We now 
recover the permutation $\pi_j$ that was used to shuffle the pixel
blocks in the encoding $y_j$, independent of any of the above steps.

It turns out that given two encoded patches $\hat y$ and $\hat y'$, the value
$\langle \hat y, \hat y' \rangle$ correlates with whether or not the patches
were placed in the same location with respect to the original image---whatever
(unknown) position that happened to be.
Therefore, to recover the per-image permutation, we choose
one encoded image as the \emph{reference} permutation and
decorrelate all other permutations with respect to this reference
permutation by computing each matching.

To see why this correlation occurs,
recall that \nc works by first converting the original image patches 
$\hat x_i$ into the
latent encoding $\hat z_i$.
Then, the pre-permuted output is given
by $\hat y_j = A_{k+1}(ReLU(z + \delta_j)) + b_{k+1}$.
where
$\delta_j$ is a (fixed at initialization, but randomly sampled)
vector $\delta_j \sim N(0,I \cdot 1)$.
If we were to pretend that $ReLU(z)=z$ (and it is, half of the time),
then we would have that
$\hat y_j \approx A_{k+1}z + A_{k+1} \delta_j + b_{k+1}$.

Next, recall the purpose of the encoding $z$ is to be completely decorrelated
from the input $x$.
So let us suppose this is the case, and that
$z \sim N(0, \sigma)$. Because the matrix $A_{k+1}$ was also
sampled i.i.d. from a Gaussian, we should have that
$\hat y_j = \delta'_j + c$.
where $c$ is a roughly-Gaussian vector.
Therefore, by taking the inner product of two encoded patches,
if the patches were placed in the same location their mean should
be dominated by the position encoding.
If, conversely, they were not placed in the same position than
the inner product should be approximately zero.

\subsubsection{Recover global permutation}
\label{s5}

We now show how to recover the final global permutation that maps
all locally-unpermuted encoded images onto the original positions.
Because all encoded images have now been aligned, all we must
do here is search for a single permutation $\rho$ by solving \\
$\text{arg max}_\rho \sum_i \sum_j \psim\big(x_i^{(j)}, y_{m(i)}^{(\rho(i))}\big).$
Again, we solve this by with max weight matching as we
have done all prior times.

Importantly, note that while this step will achieve good results
given the correct matching $m(\cdot)$, we have a (poor)
initial solution to the matching problem at this point.
As a result, there will be a significant amount of noise in the
computation, however his noise should be uncorrelated, and so the
signal from the correctly-matched images should dominate.

\subsubsection{Improve image-encoding matches}
\label{s6}

The final step in our attack generates an improved matching
between the original images and the encodings, making use of
the fact that (an approximation of) the complete image-patch-permutation $\pi_j$
has been recovered.
The core of the algorithm remains the same as the initial
matching: we compute the similarity between all images and
all encodings, and choose the matching that minimizes total
cost.

However, because we now know the correct permutation between
the images and the encodings, we can get a more accurate
measurement of the similarity between any given image
and encoding.
Instead of having to compute the min cost matching with
$\max_{\rho \in Sym(a^2)} \isim_{\rho}$, we can instead
just directly use the correct permutation $\rho^*$ that
we have recovered and compute the similarity as
$\isim_{\rho^*}$.

Note again that here we are assuming that we have the correct
alignment between each image and its corresponding encoding.
In general we will not have this perfect alignment, but it
will be a good approximation of the correct alignment.
And this results in a better matching from original images
to encodings.
Once we have this improved matching, we can then iterate these
last two steps progressively improving the generated matching
and generated alignment.
This perfectly solves the \nc challenge.

\subsection{Completely break \nc}
\subsubsection{Extract transformation network}
\label{s7}

Given this approximately correct matching $m$, and the approximately
correct image-patch-permutation $\pi_j$,
we will now solve for (an approximation of)
the original encoding function $T$.
%

Suppose that (1) our matching between original and encoded images
was perfect, and
(2) we have also recovered the image-level permutations perfectly.
(In practice it's not perfect, but for now just suppose it was.)

Then it would be possible to use this ``known plaintext'' to
extract the transformation $T$ by solving for it via gradient
descent.
Specifically, we can randomly initialize our own transformation
function $\tilde T$ from random initial weights, and then
via gradient descent solve
\[\mathop{\text{arg min}}_{\tilde T \in \text{Sim}(\mathcal{X})}\,\,\,\,\,
\sum_{i=1}^N  \lVert \tilde T(x_i) - y_{m(i)} \rVert.\]
%

This is possible to do because the system is over-determined:
a single image-encoding pair maps a
$256 \times 256 \times 3$ image to a $256 \times 2048$ dimensional
encoding, giving \emph{half a million} known input-output pairs.
Because the transformation network $T$ has roughly $30$ million parameters,
just $60$ image-encoding pairs are sufficient to completely determine
the transformation.

Notice, though, that we don't actually have a perfect matching $m$.
However, neural network training is exceptionally robust to label
noise.
As long as there exist a sufficient number of correctly aligned images,
the fact that many are not will not harm the quality of the solution
(by much).
This allows us to still run the model extraction on the noisy data.

\subsubsection{Recover perfect matching}
\label{s8}

Given the extracted transformation $\tilde T$, we can now generate our
\emph{expected} encodings $\tilde y_{m(i)} = \tilde T(x_{i})$.
This now gives us a significantly better way to match original
images to encoded images:
create a new similarity graph where the weight on each edge ($i,j$)
is given by $\lVert \tilde y_i - y_j \rVert$.
Solving the bipartite matching here again finds a solution that is
perfect for all experiments we have attempted.
(In fact, we can solve the \nc challenge perfectly without even resorting
to these last two steps.)

\section{Theoretical Insights}
The previous work of~\cite{carlini2020attack} showed the existence of  barriers against achieving private machine learning through \emph{instance encoding} schemes. However, the formulation of instance encoding by~\cite{carlini2020attack}  does not allow the encoding scheme to use private keys.
Here, we  focus on the setting where instance encoding mechanisms are allowed to use keys. We first introduce notions about such instance encoding mechanisms.
\paragraph{Notation.} We use calligraphic letter (e.g. $\cD$) for distributions and capital letters for sets.  We use $\cD\equiv \cD'$ to denote that two distributions $\cD$ and $\cD'$ are identical. We also use $\cD^{\times n}$ to denote the distribution of vectors of size $n$ with elements identically and independently distributed equivalent to $\cD$.  For a function $f$ and a distribution $\cD$, we use $f(\cD)$ to denote the imposed distribution of sampling from $\cD$ and then applying $f$. For a distribution $\cD$ and a function $c$, we use $\cD_c$ to denote  $f_c(\cD)$ where $f_c(x)=(x,c(x))$. For a function $h$, we use $\Risk(h,\cD_c)$ to denote $\Pr_{(x,y)\gets \cD_c}[h(x)\neq y]$. We say a learning algorithm has $(\eps,\delta)$ error  on a concept function $c$ and a distribution $\cD$, if for all $n\in \N$ we have $\Pr_{S\gets \cD_c^{\times n}}[\Risk(h,\cD_c)\geq \eps(n)]\leq \delta(n)$.  

For a specific value $y$ in the support of $c(\cD)$, we use $\cD_c^y$ to denote $f_c(\cD^y)$ where $\cD^y$ is the distribution $x \gets \cD$ conditioned on $c(x)=y$. 
We say a learning algorithm has $(\eps,\delta)$ \emph{balanced-error} on a concept function $c\colon X\to Y$ and a distribution $\cD$, if for all $n\in \N$ and all labels $y\in Y$ we have $\Pr_{S\gets \cD_c^{\times n}}[\Risk(h,\cD_c^y) )\geq \eps(n)]\leq \delta(n)$.
\begin{definition}[Learning with keyed encoding] \label{def:enc}
A \emph{learning  with keyed encoding} protocol consists of a pair of algorithms $L$ and $E$ as follows. The encoding mechanism $E$ is a potentially randomized algorithm $E\colon X\times K \times \AUX \to \widetilde{X}$  that takes an instance $x$, a key $k\in K$, and an auxiliary information $\aux\in \AUX$ as input and then outputs an encoded instance $ \widetilde{x}  \in \widetilde{X} $.   (When clear from the context, we omit the auxiliary information from the encoding input.) The learning algorithm $L\colon (\widetilde{X}\times Y)^*\to \Theta$   takes an encoded dataset and outputs a model $\theta\in\Theta$. 
We define the following properties for such a protocol $(L,E)$. 
    \begin{itemize}

      \item \textbf{Statistical NIA  privacy for a given concept class:} The encoding algorithm $E$ is $\eps$-NIA (no instance attack) private on a concept class $C$ and distribution $\cD$ if for all $c\in C$ the advantage of any adversary in the following game is bounded by $\eps$. The adversary $A$ selects two instances $x_0$ and $x_1$ such that $c(x_0)=c(x_1)$. Then the encoder samples a bit $b\gets \set{0,1}$ and a key $k\in K$ is sampled according to distribution $\mathcal{K}$ (which, without loss of generality, can be assumed to be uniform) and runs the encoding  $ E(x_b,k,c)$  to get $\widetilde{x}$. The adversary will be given $\widetilde{x}$, and it must decide whether $b=0$ or $b=1$ by outputting $b'$. The advantage of the adversary (for $c$ and $\cD$) is defined as $\Pr[b=b']-1/2$.   
              \item   \textbf{Statistical CIA privacy for a given a concept class:} 
    The $\eps$-CIA (chosen instance attack) privacy is defined similarly to statistical NIA   privacy with the difference that 
    after proposing the challenge instances $x_0,x_1$, the adversary gets access to oracle    $E'(\cdot, k,c)$, where $E'(\cdot, k,c)$ is the same as $E(\cdot, k,c)$ with the exception that none of  $x_0,x_1$ can be requested. This definition is different in that the adversary can query the encoding of any given point and is not restricted to random access to the encoded distribution.  
    
    \end{itemize}


\end{definition}

 Definition \ref{def:enc}  is aligned with Challenge 1 of \nc in the sense that it allows the adversary know the challenge instance $x_0,x_1$ and wants to match the encoded instance to $x_b$. In Challenge 1, the task is even harder, as \emph{multiple} instance (and not even selected by the adversary) shall be \emph{all} matched to their corresponding encodings, and even \emph{without} the oracle access provided to CIA attacks. We also comment that Definition \ref{def:enc} is   closely follows the   style of standard indistinguishability-based security definitions for encryption \cite{goldwasser1984probabilistic}, in which the job of the adversary is to map the given challenge ciphertext to the right plaintext that is known to the adversary.
Finally, we comment that one can also define a notion of \emph{random} instance attacks, that falls between NIA and CIA attacks, in which the adversary can request encodings of \emph{random instances}.

We  make the observation that  in the setting of CIA privacy, the distinguishing attacks presented in \cite{carlini2020attack} against unkeyed instance encoding schemes also apply to keyed instance encoding mechanisms. This can be verified in a straightforward manner, and hence we skip repeating such results here.

\paragraph{The ideal   scheme of \cite{yala2021neuracrypt}.} The authors of \cite{yala2021neuracrypt}   introduce an ``ideal encoding'' that is the basis of their ideal privacy goal.  This scheme first randomly samples a permutation $k\colon X\to X$ on the input space that maps each instance $x$ to an instance $x'$ with the same label $c(x)$.  
in which each $x$ is mapped to a random $\widetilde{x}$ with the same label $c(x)$ through a random permutation.

Here we    observe that the ideal scheme of \nc satisfies a very strong security guarantee, as it is  $0$-CIA private.  The reason  is that if the adversary proposes $x_0\neq x_1$, then for any pair of  encodings $\widetilde{x}_0 \neq \widetilde{x}_1$ it is equally likely that $E(x_0,k,c)=\widetilde{x}_0,E(x_1,k,c)=\widetilde{x}_1$ or that $E(x_0,k,c)=\widetilde{x}_1,E(x_1,k,c)=\widetilde{x}_0$, and this equality holds \emph{even conditioned} on  any way of fixing   the encodings of all the points other than $x_0,x_1$. Since the oracle in the CIA security game does not answer encodings of $x_0,x_1$, hence the adversary will have exactly chance $1/2$ of winning the game, even if it is given all the encodings other than those of $x_0,x_1$.

\paragraph{The ideal  vs. the heuristic schemes of \cite{yala2021neuracrypt}.} 
The work \cite{yala2021neuracrypt}  also claims that their \nc scheme  can be seen as a heuristic instantiation of the above-mentioned ideal encoding.  However, as described above,  the ideal encoding mechanism of  \cite{yala2021neuracrypt} needs to take some information about the concept function $c$ as auxiliary information; otherwise, it does not have any information about the labels. Also, note that the ideal encoding can potentially provide accuracy on both the original and the encoded instances (as they are from the same space).
%
However, there are two differences between the ideal encoding and the heuristic algorithm. Firstly, the heuristic encoding instantiation of \nc does not depend on instances' labels and does not take any auxiliary information about the concept function (that is later tried  to be learned). Secondly, in contrast to the ideal encoding scheme, the \nc approach only provides accuracy on encoded data. These disparities could be seen as indications that the theoretical analysis for the ideal scheme might not carry over to the heuristic scheme of \nc. This leads to the question of {\it{whether any encoding mechanism can successfully instantiate the ideal encoding algorithm specified above.}}

In this work, we prove  that any instantiation of the ideal instance encoding mechanism requires ``too much'' auxiliary information about the concept function. More formally, we show how to ``extract knowledge'' of $c$ from the such encoders.\footnote{This is reminiscent of knowledge extraction in cryptography \cite{goldwasser1989knowledge,bellare92dening}.} Note that we could always give the description of the concept function $c$ as auxiliary information to the encoding algorithm, and then the encoder can use that to produce the ideal encoding.\footnote{This is true, if the encoding itself is seen as an information theoretic process, ignoring the computational aspects of computing this mapping.} However, in that case the learning process becomes obsolete, as the encoder starts off, while it knows the concept function already. Roughly speaking, we show that ``knowing $c$ prior to encoding'' is necessary to achieve what an ideal encoding does achieve. In particular, we show that if the encoding scheme satisfies two properties (satisfied by the ideal encoding of~\cite{yala2021neuracrypt}), then it is possible to extract $c$ from it.

\begin{definition}\label{def:ideal}
We call an instance encoding mechanism $E\colon X\times K\times C \to X$ an weakly-ideal encoder for concept class $C$ if for all $c\in C$ we have
\begin{enumerate}[leftmargin=*]
    \item $E(x,\cK,c)\equiv E(x',\cK,c)$ for all $x,x'\in X : c(x)=c(x')$, 
    \item $c(E(x,k, c))=c(x)$ for any $x\in X$, $c\in C$ and $k\in K$.
\end{enumerate}
\end{definition}
 The first condition above is equivalent to to satisfying $0$-NIA property.
 Moreover, the ideal encoding of \cite{yala2021neuracrypt} satisfies both properties of Definition \ref{def:ideal}, and that is why we refer to such schemes as weakly ideal.


\begin{theorem}\label{thm:single_concept} Consider a distribution $\cD$   on $X$, a concept class $C \subseteq \set{0,1}^X$ and a weakly-ideal private encoding $E$ for $C$. 

If a learning algorithm $L$ with $m$ samples obtains $(\sfrac{1}{2} - \eps,\delta)$ balanced-error for all concept $c\in C$,  over distribution $\widetilde{\cD}\equiv E(\cD,\cK,c)$ for constants $0<\eps,\delta<1/2$,  
then for all $\tau\in[0,1]$ and any given pair $x_0,x_1$ where $c(x_0)=0,c(x_1)=1$, there is an oracle-aided PPT algorithm ($c$-extracting predictor) $\Extract^{L,E(\cdot,\cdot,c)}\colon X\to Y$ such that
$$\Risk(\Extract^{L,E(\cdot,\cdot,c)}(x),\cD' )\leq \tau$$
for arbitrary distribution $\cD'$. In other words, $\Extract^{L,E(\cdot,\cdot,c)}$ is predicting the output of $c$ with high accuracy with only \emph{two} labeled examples $(x_0,0),(x_1,1)$. Moreover, the running time of $\Extract$ is $\poly(m, 1/\epsilon, 1/\delta, 1/\tau).$
\end{theorem}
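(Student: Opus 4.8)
The plan is to convert the weak, label-only signal leaked by a weakly-ideal encoder into an arbitrarily accurate predictor for $c$, in three stages: simulate labeled encoded data from the two seeds, compose $L$'s output with the encoder to obtain a weak predictor, and boost. The starting observation is that the first condition of Definition~\ref{def:ideal} (the $0$-NIA property) makes the law of $E(x,\cK,c)$ depend on $x$ only through $c(x)$, while the second condition guarantees $c(E(x,k,c))=c(x)$. Consequently the label-$y$ conditional of $\widetilde{\cD}\equiv E(\cD,\cK,c)$, call it $\widetilde{\cD}^{y}$, satisfies $\widetilde{\cD}^{y}\equiv E(x_y,\cK,c)$ for any fixed $x_y$ with $c(x_y)=y$; in particular $\widetilde{\cD}^{0}\equiv E(x_0,\cK,c)$ and $\widetilde{\cD}^{1}\equiv E(x_1,\cK,c)$. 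Thus, from the two labeled seeds $(x_0,0),(x_1,1)$ and the oracle $E(\cdot,\cdot,c)$ alone, I can draw arbitrarily many labeled encoded examples from either conditional, hence from any label-mixture, and in particular simulate draws from $\widetilde{\cD}_c$ to feed to $L$. (Matching the unknown label-marginal of $\cD$ is a minor issue: I would search a $\poly$-size grid of candidate marginals and retain only models passing the validation test below, so a marginal statistically close to the true one suffices.)

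Running $L$ on $m$ simulated examples yields, with probability at least $1-\delta$, a model $h$ of balanced error below $\half-\eps$, i.e.\ $\Pr_{\widetilde x\gets\widetilde{\cD}^{y}}[h(\widetilde x)\neq y]<\half-\eps$ for each label $y$. I then define the weak predictor $W(x)=h(E(x,k,c))$ with a fresh key $k\gets\cK$. The crucial point is that this beats random guessing on \emph{every} input rather than merely on average: if $c(x)=y$ then $E(x,k,c)$ is distributed exactly as $\widetilde{\cD}^{y}$, so $\Pr_{k}[W(x)=c(x)]>\half+\eps$ for each individual $x$. This pointwise, distribution-free advantage is exactly what will let the final predictor succeed against an arbitrary $\cD'$.

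Two amplifications then finish the argument. To eliminate the $\delta$ training-failure I would repeat the training $R=O(\log(1/\tau))$ times and validate each candidate by empirically estimating its balanced error on freshly simulated labeled encoded samples; since one run succeeds with probability at least $1-\delta>\half$ and the empirical estimate is accurate except with exponentially small probability, with probability at least $1-\tau/2$ I retain a model $h^{\star}$ of true balanced error below $\half-\eps/2$. To amplify the $\eps$ advantage I output the majority of $T=O(\eps^{-2}\log(1/\tau))$ independent evaluations $h^{\star}(E(x,k_i,c))$; a Chernoff bound makes the majority wrong with probability at most $\tau/2$ for every fixed $x$. Combining the two, the randomized predictor $\Extract^{L,E(\cdot,\cdot,c)}$ errs with probability at most $\tau$ on every input, so $\Risk(\Extract^{L,E(\cdot,\cdot,c)},\cD')\leq\tau$ for all $\cD'$, and its cost is $R$ calls to $L$ plus $\poly(1/\eps,\log(1/\tau))$ encoding queries per prediction, namely $\poly(m,1/\eps,1/\delta,1/\tau)$.

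I expect the conceptual crux to be the uniform guarantee of the second stage. Standard weak-to-strong boosting amplifies an advantage measured against a single distribution, but here I need the weak predictor to beat random on each individual input, both so that the conclusion can hold for an arbitrary $\cD'$ and so that the pointwise majority vote is legitimate. The $0$-NIA property supplies precisely this uniformity by reducing the encoder to a label-only channel; granting it, the boosting and the data simulation (including the marginal-matching caveat) are routine.
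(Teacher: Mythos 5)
Your proposal matches the paper's proof essentially step for step: simulate the label-conditional encoded distributions from the two seeds via the $0$-NIA property, train $L$ on the simulated labeled sample, and amplify the pointwise $\eps$-advantage by majority vote over fresh-key encodings at inference. Your explicit validation step and the grid over the unknown label marginal are minor refinements of details the paper leaves implicit (it assumes the positive fraction $p$ is known and simply "repeats until" a good classifier is found).
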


Since the ideal scheme of \cite{yala2021neuracrypt} is also  weakly ideal, Theorem \ref{thm:single_concept} suggests that finding an instantiation of the ideal instance encoding of \cite{yala2021neuracrypt} is at least as hard as improving the state-of-the-art accuracy for the learning problem to $100\%$, on all  distributions, and using only two labeled samples.

\section{Statement from Authors}

We shared a draft of this paper with the NeuraCrypt authors, and
asked if they would like to provide a response to be included 
into our paper.
We have reproduced their reply verbatim here:

\begin{displayquote}
The main NeuraCrypt challenge remains unsolved. NeuraCrypt is designed for the setting where a hospital wishes to release their data (e.g., X-rays) for public training while protecting their raw data. In this case, an attacker has access to the hospital's encoded dataset, which is shared publicly, and may utilize any other public X-ray datasets.   Critically, attackers do not have access to the plaintext versions of the entire encoded dataset. This scenario exactly corresponds to “Challenge 2: Identifying T from distributionally matched datasets”, and we emphasize that this challenge remains unsolved.

As a stepping stone towards this challenge, we also proposed a simplified synthetic setting, “Challenge 1: Reidentifying patients from matching datasets''. In challenge 1, the attacker already has access to the entire plaintext version of the encoded data, and can leverage this in any way as the basis for their attack. This paper solves challenge 1, and demonstrates that our current instantiation of NeuraCrypt is not secure in this setting owing to our use of ReLU activations. More importantly, this attack setting does not reflect a real-world scenario, as the attacker would have no incentive to attack the encoded data if they already had access to all the plaintext samples. While it is realistic to allow the attacker to obtain a few plaintext images, either by actively trying to participate in the dataset or by incentivizing the participants in the dataset to share their private images, such settings remain closer to Challenge 2. We continue to encourage the community to develop new attacks for Challenge 2.

In Appendix A, this paper claims that Challenge 2 is a “ciphertext-only” setting, and that encoding schemes that directly release the raw private data could be secure under Challenge 2. Both of these statements are incorrect. We note that Challenge 2 is not a “ciphertext-only” setting as the attacker also has knowledge of both the plaintext data distribution and the encoder distribution. We have previously shown that this rich distributional information is sufficient to break linear encoding schemes [YEO+21b].  More importantly, encoding schemes that release raw private data are not secure under Challenge 2, as an attacker can either “reidentify the original data or recover the private NeuraCrypt encoder”  [YEO+21a] to solve the challenge.  In contrast to the hypothetical scheme shown in this paper (Appendix A.1), these two scenarios (i.e recover raw data or private encoder) are equivalent for NeuraCrypt as the encoding function can be recovered with a plaintext attack if the images were recovered (Appendix C.2  [YEO+21b], [YEO+21a]).

Our theoretical results demonstrate the existence of an optimal family of instance encoding functions that obtain perfect privacy under our threat model (Theorem 2). We note that this is not a uniqueness result (i.e other optimal encodings may exist). Our existence result motivates the search for an encoding and motivates our approximation with neural networks. Throughout the paper, we emphasized that our theoretical claims did not extend to our specific network architecture, and thus evaluated the method with adversarial attacks and invited the community to propose new attacks through the NeuraCrypt challenge.  To demonstrate that neural network approaches are “fundamentally broken” (as claimed in this paper), the authors must prove an impossibility result for the use of neural networks as encoders, which is not done in this paper.

NeuraCrypt does not claim to solve the entire field of private learning, and many important questions remain. These questions range from characterizing possible attacks to developing new theory to bound the privacy of specific neural network architectures. In turn, these developments will lead to improved algorithms. We thank the authors for their participation in Challenge 1 and look forward to attacks on the designed use-case of NeuraCrypt (Challenge 2) and on future versions of the simplified Challenge 1.
\end{displayquote}

\subsection{Our Reply}
The NeuraCrypt research paper states an ideal security definition. This security definition, if satisfied, implies Challenge 1 is secure: an adversary should not be allowed to match original to encoded images (on nontrivial data). NeuraCrypt does not satisfy this security definition as proposed in their paper. Standard and most practical security definitions in cryptography use an indistinguishability framework in which an adversary aims to match a challenge ciphertext to one of the \emph{known and chosen} plaintexts, while the adversary has access to an encryption (CPA) or perhaps even a decryption oracle (CCA). By breaking Challenge 1, we do something much stronger: perfectly matching a \emph{set} (rather than a pair) of known instances (\emph{not} chosen by the adversary) to their encodings, \emph{without} any oracle call to an encoding oracle.

The NeuraCrypt challenge on GitHub also states a second security challenge, which requires an adversary to find a good approximation of the original encoding function given only encoded outputs. We still believe this second challenge is not as meaningful. This is for two reasons.

First, we prove there exist completely insecure schemes that appear secure under Challenge 2. So even if a scheme was resilient to Challenge 2, this would not imply it has any meaningful security. The authors call our simple provable claim incorrect, without saying where the gap is.

Second, consider the following hypothetical scenario in symmetric key cryptography research. An encryption algorithm E is proposed, along with an ideal security definition that is comparably stronger than being IND-CPA secure. An attack breaks the IND-CPA security (without even using the encryption oracle), but is rejected on grounds that the scheme is not yet broken in a ciphertext-only setting[footnote: Note ciphertext-only security here means the adversary does not have aligned plaintext-ciphertext pairs. An attack is still ciphertext-only if the adversary knows some auxiliary additional information, such as the plaintext data distribution.] where the adversary has to recover the key. This argument would not be reasonable, so we do not investigate Challenge 2 further in our paper.

Most importantly, the above response states that "the authors must prove an impossibility result”. We have done so: Theorem 3.3 proves that any approximation of the “ideal encoding” (which is the only scheme satisfying their security definition in the NeuraCrypt paper) requires the knowledge of the (concept) function that is being learned. This impossibility result shows that if one has access to this encoding mechanism, they can efficiently extract the concept function from it, essentially without any data. The response does not address this fact.

\section{Conclusion}

An instance encoding privacy scheme has clear and significant privacy
benefits.
It would, for example, allow users to share private datasets to cooperatively
train models without relying on a trusted third party.
Unfortunately, designing a secure instance encoding scheme has many
potential pitfalls, and we find that the \nc scheme does not satisfy its
privacy claims.
We believe that there are two important lessons from our attack:

\paragraph{Privacy schemes must come with privacy definitions.}
In order to be evaluated, methods that claim to give some amount of privacy
must precisely state what privacy is being offered.
While the \nc paper gives a definition of what it means for a scheme to be
``perfectly private'' in an idealized version of the setup, it has no definition of what
privacy the concrete \nc instantiation is designed to offer.
We therefore analyze \nc under the closest definition to its ideal privacy statements found in the \nc Challenge.

\paragraph{Neural networks aren't ideal functionalities.}
Our attack would be infeasible on functions that
truly behaved randomly. But in order for there to be any utility,
the \nc encoding operation must not destroy information completely.
Indeed, the fact that it is possible to train a neural network on
these encodings hints at the fact that neural networks should be able
to distinguish between encoded images in the first place.
We make use of this in our practical attack on \nc.

\paragraph{Iterative boosting can give complete breaks.}
Our attack works by taking an attack that detects if patches are
related with $70\%$ probability, and boosts this into an attack that
detects if images are similar with $98\%$ probability---then this gives
a small partial break on the scheme recovering $2\%$ of the matching, 
which we then boost into a fairly strong break recovering $50\%$ of
the matching, which we yet again boost into a complete break recovering
$100\%$ of the matching.

The fact that any individual step in our attack is possible is not surprising.
Indeed, when developing our attacks, we developed them in sequence from
top down.
Cryptographic systems are often discarded immediately upon the demonstration
of \emph{any} weakness, however limited, because of the understanding
that attacks only improve over time.
What starts out as a small weakness often finds a way to expand into
a complete break.

In the future, we hope that demonstrating even slight weaknesses will
be enough to cause researchers to abandon potential defenses.
There is often a not insignificant amount of effort that goes into turning
a small break into a complete one.
And while it is helpful to show that this can be done, we argue schemes
should be considered broken at the indication of the first weakness
as is done in cryptography.

\paragraph{Lessons.}
This is the second time we have developed a complete break on
an instance-encoding scheme.
Defenses that intend to deliver privacy through instance encoding \emph{must}
contain careful theoretical arguments---and not just about
ideal versions of their schemes (as was done in \nc)
or particular sub-problems used in their scheme (as was done in InstaHide).

There is no room for error in privacy.
Unlike in security, where zero-day vulnerabilities can be patched to mitigate harm, once a dataset has been published it must
remain private \emph{essentially forever}: an attack on the
scheme, even a decade later, can cause significant harm.
Our attack, for example, can only break the system given a (small)
number of known-plaintexts; we can not perform a ``ciphertext-only'' attack.
And so \emph{for now}, \nc can safely be used in settings where only
encoded images are released.
However if a hospital were to release just the encoded images using
\nc today, a future attack that extended ours to this new ciphertext-only setting
would violate the privacy of these encoded images retroactively.

Especially for schemes explicitly designed to protect medical images,
we fundamentally disagree with the research direction that aims
for best-effort privacy without strong proofs and rigorous evaluations.
If and when future schemes are proposed, it should be assumed
that they are just as fundamentally broken as InstaHide and
\nc, unless the authors are able to give strong and compelling
evidence to the contrary.

\bibliographystyle{alpha} 
\bibliography{neuracrypt}

\appendix
\section{Alternate Privacy Game}

\nc also offers a second ``real-world'' privacy definition that we believe is not meaningful.
We state it here for completeness.

\smallskip \noindent
\textbf{Setup.} 
As before, let $\mathcal{X}$ be a dataset and $\text{sym}(\mathcal{X})$ a family of encoding functions.
Only Alice knows $\mathcal{X}$; both Alice and Bob know $\text{sym}(\mathcal{X})$.

\smallskip \noindent
\textbf{Alice} proceeds as before sampling $\vec{x} \subset \mathcal{X}$ choosing a $T$ and encoding $y_i = T(x_i)$ sending
the resulting $\vec y$ to Bob. Alice also chooses $\vec{z} \subset \mathcal{X}$ as a held-out test set for later.

\smallskip \noindent
\textbf{Bob} studies the encodings $\vec y$ (and possibly auxiliary data $\mathcal{Z}$ not
overlapping with $\mathcal{X}$).
He produces a guess at a transformation function $T'$.

\smallskip \noindent
\textbf{Evaluation.} Bob's ``score'' is computed through the following procedure.
Alice refers back to the samples $\vec{z}$ chosen earlier,
and then computes both $y_i = T(z_i)$ and  $y'_i = T'(z_i)$.
Count the number of instances $y_i$ where $i = \mathop{\text{arg min}}_j \mathcal{D}(y'_j, y_i)$
where $\mathcal{D}$ is a distance metric.
%
%
%
%
If Bob recovers $T' \equiv T$ then $\mathcal{D}(y_i, y'_i) = 0$ and therefore will score perfectly.
If however $T'$ is completely unrelated to $T$ then on average Bob should score just $1$.

\subsection{Why the second challenge is not as meaningful}

\paragraph{Ciphertext-only security has few practical applications.}
In this second challenge, the
adversary is \emph{exclusively} given access to the encoded images $\vec y$ and
no access to any original images, and must use the encodings alone to recover the function $T$.
In the terminology of cryptography, this is exactly asking for a \emph{ciphertext-only
attack}. This setting that has been recognized in cryptography as completely unrealistic for several decades.

While ciphertext-only security might be considered ``realistic'', schemes in cryptography
are intentionally designed to be secure even in ``unrealistic'' situations.
The reason for this design decision is twofold.
First, attacks only improve: a chosen-plaintext break often can be converted
in a known-plaintext break which can then be converted into a ciphertext-only break \cite{biryukov1998differential}.
Since the purpose of a challenge is to understand the security properties of a
system, it is better to understand the worst-case behavior than be hopeful that there is a chance it may be secure in some weak setting.

Second, ciphertext-only robustness is, contrary to the challenge claims, not more useful in practice.
In any practical setting where \nc would be deployed, an adversary would trivially
be able to obtain at least \emph{known} (if not \emph{chosen}) training data by visiting a hospital,
receiving a medical scan of themself, and then viewing their own medical data.

\paragraph{Non-private schemes solve this challenge.}
Any scheme that claimed to be ``privacy-preserving'' should at the
very least satisfy the basic requirement that given $y$ it should not 
be possible to perfectly reconstruct $x$.
We show that there exist schemes that do not satisfy this basic reconstruction requirement,
and yet are ``private'' under the \nc ciphertext-only challenge definition.

Assume for simplicity that inputs are chosen from $X=\set{-1,1}^n$.
Let $k \colon X \to \set{-1,1}^{a^2}$ be a function chosen at random among all possible
functions from this domain to this range. Also, for simplicity, let the metric $\cD$ be the hamming distance.
Then define $E(x,k) = (x,k(x))$.

We now prove that no adversary can win the security game of Challenge 2 with probability better than $1/N + n\tilde{O}(1/a)$. By the linearity of expectation, the expected number of matches will be $1+ nN\tilde{O}(1/a)$ which can be arbitrary close to $1$ for sufficiently large $a$.

Now let $E': X\to X\times \set{-1,1}^{a^2}$ be adversary's guessed encoding. We decompose $E'$ into $(E'_1,k')$ where the range of $E'_1$ is $\set{-1,1}^n$  and range of $k'$ is $\set{-1,1}^{a^2}$. For an input  $z_i \not \in \vec x$, let $r_i =|E'_1(z_i) -z_i| $ we have (all the following probabilities are over the output of random function on new queries )
\begin{align*}
    p&=\Pr[\forall j\neq i: |E'(z_i) - E(k,z_i)|\leq |E'(z_i) - E(k,z_j)|] \\&= \sum_c \Pr[|E'(z_i) - E(k,z_i)|=c]\prod_{j\neq i}\Pr[|E'(z_i)-E(k,z_j)|\geq c]
    \\&\leq\sum_c \Pr[|k'(z_i)-k(z_j)|\geq c-n]^{N-1}\Pr[|E'(z_i) - E(k,z_i)|=c]
    \\&=\sum_c \Pr[|k'(z_i)-k(z_j)|\geq c-n]^{N-1}\Pr[|k'(z_i)-k(z_i)|=c-r_i] 
    \\&=\sum_c \frac{{a^2 \choose c-r_i}}{2^{a^2}}\left[\sum_{c'\geq c-n} \frac{{a^2 \choose c'}} {2^{a^2}}\right]^{N-1}=q
\end{align*}

Now we can use normal approximation of binomial distributions to estimate the above probability. Let $t=a^2/2 - a\ln(a/n) $, $h=c-n$ and $g=c-r_i$. Then we have

\begin{align*}
q&\approx \int_\R e^{-(2g - a^2)^2/2a^2}  (1-\Phi ((2h-a^2)/2a))^{N-1} dc\\
&\leq \int_{c<t} e^{-(2g - a^2)^2/2a^2}  (1-\Phi ((2h-a^2)/2a))^{N-1} dc\\ &~+\int_{c\geq t} e^{\frac{-(2g - a^2)^2}{2a^2}}  (1-\Phi ((2h-a^2)/2a))^{N-1} dc\\
&\leq \frac{n}{a}\\
&~+ \int_{c\geq t} e^{-(2g - a^2)^2  +\frac{(2h - a^2)^2}{2a^2}} e^{-\frac{2h - a^2)^2}{2a^2}}(1-\Phi (\frac{2h-a^2}{2a}))^{N-1} dc\\
&\leq\frac{n}{a}\\
&=\frac{n}{a}+  e^{-((2a\ln(a/n))^2  -(2a\ln(a/n) -n)^2)/2a^2} \frac{1}{N}\\
&\leq \frac{n}{a} + (1+O(\frac{n\ln(a/n)}{a})) \frac{1}{N} \leq \frac{n}{a} + O(\frac{n\ln(a/n)}{aN}) + \frac{1}{N}
\end{align*}

As a result, this challenge is not as meaningful in what it guarantees about privacy, because it does not even prevent the
possibility that a  scheme satisfying this security to reveal their training data completely.
While this particular challenge could be repaired so that
schemes that solved this challenge necessarily also prohibit reconstruction,
the first issue would still remain: ciphertext-only security is not meaningful.
\begin{remark}
Note that in the above analysis, the adversary is required to use a Boolean encoding function. We can relax this and allow the adversary to pick a real function by selecting the random function $k\colon X \to \set{-1,0,+1}^n$. 
\end{remark}
\begin{remark}
The above construction uses a  large key which is the description of a random function. It is possible to make this key small by using a psudo-random function. They key $k$ will be a key for the PRF and then the PRF will be used instead of the random function. This comes at the cost of being secure only against  computationally bounded adversaries. 
\end{remark}

\section{Omitted Proof and discussions}
We first prove Theorem \ref{thm:single_concept}
\begin{proof}[Proof of Theorem \ref{thm:single_concept}]
The extraction algorithm $\Extract$ works as follows. Let $p$ be the fraction of positive examples in $\cD$. The algorithm first samples $m'\gets Binom(m,1-p)$. Then, given an instance $x_0$ from class $0$ and an instance $x_1$ from class $1$, it samples $m$ different keys $k_1,\dots,k_m$ and obtains encodings $e_1=E(x_0,k_1),\dots,e_{m'}=E(x_0,k_{m'}),e_{m'+1}=E(x_1,k_{m'+1}),\dots,e_m=E(x_1,k_m).$ Then, it uses $L$ to train classifier on the labeled dataset $\set{(e_1,0),\dots,(e_{m'},0),(e_{m'+1},1),\dots,(e_m,1)}$. Based on the assumption on $L$ and the properties of the encoding algorithm, this classifier will obtain balanced accuracy at least $0.5+\eps$ with probability at least $1-\delta$ on distribution $\widetilde{D}$. The algorithm repeats this process until it finds a classifier $h$ with balanced accuracy at least $0.5+\eps$. Now, at inference time, for a given sample $x\gets \cD'$, the predictor $\Extract$ first samples a set of keys $k^I_1,...,K^I_T\gets K$ and encodes $x$ with all the keys to get $e^I_i=E(x,k^I_i)$. Then it feeds these encodings to the classifier $h$ and takes the majority vote. Note that because of the balance accuracy of $h$ and also the properties of the encoding, we know that prediction of each of these encodings would be correct with independent probability at least $0.5+\eps$.  This means with enough repetition we can ensure that the prediction accuracy of each example is more than $1-\tau$.
\end{proof}
\begin{remark}
We can define an approximate version of ideal encoding of Definition \ref{def:ideal}. Instead of exact equality in the first condition, we can require the distributions to have small statistical distance. Or they can be defined to be computationally indistinguishable. We can also define the second condition to happen with high probability. In all this approximations, we can have a similar theorem to Theorem \ref{thm:single_concept} with a small degradation in the accuracy of the final classifier. 
\end{remark}
\paragraph{Further limitations in the multi-party setting.} Finally, one might conjecture that \nc provides a form of ``multiparty delegation'' for private model training scheme as follows. (1) First, multiple parties can encode their data sets $X_1,X_2,\dots$ into $E(X_1,k_1),E(X_2,k_2)\dots$ using their private encoding keys $k_1,k_2,\dots$. (2) Then a central powerful party (e.g., a cloud service provider) trains a model $h$ on the encoded data. (3) Finally, each of the parties, \emph{knowing their secret encoding key}, can use the trained model $h$. We observe that any such (purported) scheme, at the very least, cannot provide a security level as provided by multi-party computation schemes. The reason is that the encoded data $E(X_1,k_1),E(X_2,k_2),\dots$ would provide  ``free accuracy boosting'' to parties without the private keys. In particular, suppose a scheme as described above exists. Then, consider an adversary $A$ who has its own data set $X$, which if used as training set would only provide low accuracy. Then, $A$ can simply ``encode'' its own data into $E(X,k)$ using its own key $k$, add this batch of encoded data to the shared public pool of encoded data to get $S = E(X,k) \cup E(X_1,k_1) \cup E(X_2,k_2)\dots$, and then use $S$ to train its model.\footnote{At a high level, this attack can be interpreted as the observation that multi-key homomorphic encryption schemes cannot be decryptable using \emph{individual} decryption keys (as opposed to requiring \emph{all} of them).}

\end{document}